\documentclass[10pt]{article}
\usepackage[T1]{fontenc}
\usepackage[utf8]{inputenc}
\usepackage[a4paper,width=150mm,top=25mm,bottom=25mm]{geometry}

\usepackage{macros}
\usepackage{thmtools,thm-restate}

\bibliography{refs}

\def\*#1{\boldsymbol{#1}} % Use \*A for \mathbf{A}
\def\+#1{\mathcal{#1}} % Use \+A for \mathcal{A}
\def\-#1{\mathrm{#1}} % Use \-A for \mathrm{A}
\def\=#1{\mathbb{#1}} % Use \^A for \mathbb{A}
\def\!#1{\mathfrak{#1}} % Use \!A for \mathfrak{A}

\newcommand{\inner}[2]{\left\langle #1, #2\right\rangle}
\newcommand{\tp}[1]{\left(#1\right)}

\newcommand{\Ker}{\mathsf{Ker}\;}
\newcommand{\Ran}{\mathsf{Ran}\;}

\usepackage{xifthen}
\def\oPr{\mathbf{Pr}}
\renewcommand{\Pr}[2][]{ \ifthenelse{\isempty{#1}}
  {\oPr\left[#2\right]}
  {\oPr_{#1}\left[#2\right]} } % Use \Pr[a]{b} for \mathbf{Pr}_a[b], \Pr{b} for  \mathbf{Pr}[b]

\def\oE{\mathbb{E}}
\renewcommand{\E}[2][]{ \ifthenelse{\isempty{#1}}
  {\oE\left[#2\right]}
  {\oE_{#1}\left[#2\right]} }

\usepackage{xparse}

\def\oVar{\mathbf{Var}}
\renewcommand{\Var}[2][]{ \ifthenelse{\isempty{#1}}
{\oVar\left[#2\right]}
{\oVar_{#1}\left[#2\right]} }
\title{Spectral Gap of Down-Up Walks via Trickle-Down: A Simplified and Sharpened Analysis}
\author{Xiaoyu Chen\thanks{Email: \texttt{xiaoyu@mit.edu}, 
Massachusetts Institute of Technology. Supported by the NSF CAREER grant CCF-2443045, and the Reed Fund at MIT.} \and Kuikui Liu\thanks{Email: \texttt{liukui@mit.edu}, 
Massachusetts Institute of Technology. Supported by the NSF CAREER grant CCF-2443045, and the Reed Fund at MIT.}}
\date{}

\begin{document}
\maketitle

\begin{abstract}
Local-to-global techniques for establishing spectral gaps have played a central role in the modern theory of Markov chain mixing times and the theory of high-dimensional expanders. One of the most striking results in this burgeoning literature is that a spectral gap for the global down-up walk on the facets of a pure simplicial complex can be reduced to sufficiently strong spectral expansion of just the codimension-$2$ links of the complex, a phenomenon colloquially referred to as ``trickle-down''. These types of theorems have had many important applications, including rapid mixing of the exchange walk on the bases of any matroid. In this primarily expository article, we give streamlined proofs of two such theorems in the literature, one by Oppenheim \cite{Oppenheim2018} and one by Leake--Oveis Gharan \cite{LeakeOveisGharan2025}, via an integrated Bochner method. Moreover, in the latter setting, we quantitatively strengthen the dependence of the global spectral gap on the dimension of the complex and the spectral influence, resolving an open question of Leake--Oveis Gharan \cite{LeakeOveisGharan2025}.

\textbf{Disclaimer:} The proofs were developed through a couple rounds of interaction with GPT-5.6 Sol Ultra. We later discovered that \cite{GuoZhang2026Planar} had independently proven the same strengthening of the Leake--Oveis Gharan trickle-down theorem using an extremely similar argument, also found by GPT-5.6 Sol Ultra. The focus of their paper is the complexity of approximating the partition function of spin systems on planar graphs, not on the trickle-down phenomenon itself. In contrast, our motivation is primarily expository, and we hope to bring Bochner-type methods and their connections with the trickle-down phenomenon to the attention of a wider community of researchers.
\end{abstract}

%\newpage
%{\small
%\tableofcontents
%}
%\newpage

\section{Introduction}\label{sec:intro}
The study of spectral gaps of random walks on high-dimensional objects has a long history, with deep connections to the analysis of Markov chains, pseudorandomness, group theory, and beyond. Over the past decade, a powerful new paradigm for establishing spectral gaps has been developed based on \emph{local-to-global principles}: One seeks to certify a spectral gap for the global object by only inspecting local pieces of it. An important instantiation of this theme is the ``trickle-down'' phenomenon originating from the theory of high-dimensional expanders. Theorems of this kind have had abundant applications in the construction of expander graphs and their hypergraph/simplicial complex analogs \cite{EK16, KKL16, KO18, OP22, KO23, DD24, HS24, OS24}, as well as in the design and analysis of sampling algorithms for high-dimensional probability distributions that arise in statistical physics and combinatorics \cite{AnariEtAl2019, ALG2021, AbdolazimiOveisGharan2023, AnariKoehlerVuong2024, WZZ24, LeakeLindbergOveisGharan2025, LeakeOveisGharan2025, CCCYZ25}. In this short note, we give streamlined proofs of two notable ``trickle-down'' theorems in the literature, and quantitatively strengthen one of them. To state them, let us first set up the requisite notation regarding random walks on pure simplicial complexes.

% TODO: cite impact of trickle down in combinatorics vis-a-vis Lorentzian polynomials

\paragraph{The Set-Up} Recall that an \emph{(abstract) simplicial complex} $X$ is a downwards-closed family of subsets of a finite universe $U$. We call the elements of $X$ its \emph{faces}, and its inclusion-wise maximal faces its \emph{facets}. We say $X$ is \emph{pure} if all its facets have the same cardinality $n$. We stratify $X$ by ``dimension'': $X(k) \defeq \{\tau \in X : \abs{\tau} = k\}$ for $0 \leq k \leq n$. The \emph{codimension} of a face $\tau$ is the quantity $\mathsf{codim}(\tau) \defeq n - \abs{\tau}$. For a face $\tau \in X$, its \emph{link} $X^{\tau}$ is the simplicial complex on vertex set $U \setminus \tau$ with faces $\{\sigma \setminus \tau : \sigma \in X, \sigma \supseteq \tau\}$.

Now, let $\mu$ be a probability measure with full support over the facets of $X$, and assume $n\geq2$. We call the pair $(X,\mu)$ a weighted simplicial complex. For each $0 \leq k \leq n$, the distribution $\mu = \mu_{n}$ naturally induces a \emph{marginal distribution} $\mu_{k}$ on $X(k)$ via $\mu_{k}(\tau) = \frac{1}{\binom{n}{k}} \cdot \Pr[\sigma \sim \mu]{\sigma \supseteq \tau}$. For a face $\tau$, we write $\mu^{\tau}$ for the conditional measure induced on $\{\sigma \in X(n) : \sigma \supseteq \tau\}$. Naturally, by removing $\tau$ from each $\sigma \in \supp(\mu^{\tau})$, we see that $\mu^{\tau}$ can be viewed simultaneously as a probability measure on the facets of the link $X^{\tau}$.

Given a weighted simplicial complex $(X,\mu)$, the \emph{down-up walk} $P_{n}^{\bigtriangledown}$ is the Markov chain on facets whose evolution is described as follows: If the current state is $\sigma \in X(n)$, then
\begin{enumerate}
    \item remove a uniformly random element $x \in \sigma$, and
    \item add a random element $y \in X^{\sigma \setminus \{x\}}(1)$ with probability $\mu_{1}^{\sigma \setminus \{x\}}(y)$.
\end{enumerate}
The down-up walk is reversible with stationary measure equal to $\mu$. Moreover, it is positive semidefinite with respect to $L^{2}(\mu)$, the space of functions $f : X(n) \to \R$ endowed with the inner product $\langle f,g \rangle_{\mu} \defeq \E[\sigma \sim \mu]{f(\sigma)g(\sigma)}$.

Our eventual goal will be to establish a global spectral gap for the down-up walk $P_{n}^{\bigtriangledown}$ by relating it to certain local random walks which we now describe. The \emph{$1$-skeleton} of $X$ is the graph whose vertex set is $U = X(1)$ and whose edge set is $X(2)$. Together with the measure $\mu$, we can define the \emph{local random walk} on this graph via
\begin{align*}
    P^{\emptyset}(x \to y) &\defeq \frac{1}{2} \cdot \frac{\mu_{2}(x,y)}{\mu_{1}(x)}, \qquad \forall x \neq y \\
    P^{\emptyset}(x \to x) &\defeq 0, \qquad\qquad\qquad\,\, \forall x \in U.
\end{align*}
As we will see in a moment, this random walk is intimately related to the down-up walk $P_{2}^{\bigtriangledown}$ on the complex formed by truncating $X$ to sets of cardinality at most $2$. For any face $\tau \in X$ with codimension at least $2$ (so that the $1$-skeleton graph is well-defined), we write $P^{\tau}$ for the local random walk with respect to the $1$-skeleton of the weighted link $(X^{\tau},\mu^{\tau})$; this random walk is reversible with respect to the conditional marginal measure $\mu_{1}^{\tau}$ on $U \setminus \tau$.

As a special case, we can recover more familiar objects by considering the partite setting. We say $X$ is \emph{$n$-partite} if there exists a partition $U = U_{1} \sqcup \dotsb \sqcup U_{n}$ such that every facet $\sigma \in X(n)$ has exactly one element from each part. In this case, we can equivalently view each facet $\sigma$ as a member of the product space $\prod_{i=1}^{n} U_{i}$, where the $i$\textsuperscript{th} coordinate $\sigma(i)$ is given by the unique element in $\sigma \cap U_{i}$. In this case, $P_{n}^{\bigtriangledown}$ is exactly the \emph{(heat-bath) Glauber dynamics}, where in each step, the chain selects a uniformly random coordinate and resamples it conditioned on the current assignments of all remaining coordinates.

\subsection{Trickle-Down Theorems}
One of the main insights in the local spectral theory of high-dimensional simplicial complexes is that the spectral gap of $P_{n}^{\bigtriangledown}$ can be lower bounded in terms of the spectral gaps for the local random walks $P^{\tau}$. To be more precise, for a reversible Markov chain described by a transition probability matrix $P$ with eigenvalues $1 = \lambda_{1}(P) \geq \lambda_{2}(P) \geq \dotsb$, define its spectral gap as $\gamma(P) \defeq 1 - \lambda_{2}(P)$. A well-known theorem of \cite{AlevLau2020}, quantitatively strengthening prior results of \cite{KaufmanMass2017, DinurKaufman2017, KaufmanOppenheim2020}, states that
\begin{align}\label{eq:alev-lau}
    \gamma\wrapp{P_{n}^{\bigtriangledown}} \geq \frac{1}{n} \prod_{k=0}^{n-2} \min_{\tau \in X(k)} \gamma\wrapp{P^{\tau}}.
\end{align}
Given this, all that remains is to lower bound the $\gamma\wrapp{P^{\tau}}$. Typically, one can obtain extremely sharp control on $\gamma\wrapp{P^{\tau}}$ via explicit calculation when $\tau$ has codimension exactly $2$, since $P^{\tau}$ is a very simple Markov chain. For example, in the partite setting where $\abs{U_{i}} = 2$ for all $i=1,\dots,n$ (corresponding to when $\mu$ is a measure over the Boolean cube), $P^{\tau}$ is a Markov chain on at most $4$ states whose states correspond to pairs $(i,s)$ where $i$ is a coordinate not fixed by $\tau$ and $s$ is a possible assignment to coordinate $i$. Remarkably, trickle-down-type theorems allow one to control $\gamma\wrapp{P^{\tau}}$ for all $\tau$ using only information about the codimension-$2$ case. This can then be combined with \cref{eq:alev-lau} to obtain a global spectral gap for $P_{n}^{\bigtriangledown}$ based solely on local information.

\begin{theorem}\label{thm:oppenheim}
Suppose $P^{\tau}$ is irreducible for all $\tau \in X$ of codimension at least $2$ and that there exists $\gamma > 1 - \frac{1}{n-1}$ such that $\gamma\wrapp{P^{\tau}} \geq \gamma$ for all $\tau \in X(n-2)$. Then for every $2 \leq k \leq n$ and every $\tau \in X(n-k)$,
\begin{align*}
    \gamma\wrapp{P^{\tau}} \geq \frac{1 - (k-1) \cdot (1 - \gamma)}{1 - (k-2) \cdot (1 - \gamma)}.
\end{align*}
Moreover,
\begin{align*}
    \gamma\wrapp{P_{n}^{\bigtriangledown}} \geq \frac{1 - (n-1) \cdot (1 - \gamma)}{n}.
\end{align*}
\end{theorem}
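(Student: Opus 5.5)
Writing $\lambda_k \defeq 1-\gamma_k$ for the worst second eigenvalue of $P^{\tau}$ over faces $\tau$ of codimension $k$, the plan is to prove the one-step trickle-down inequality
\[
  \lambda\wrapp{P^{\tau}} \leq \frac{\lambda'}{1-\lambda'},
\]
where $\lambda'$ bounds $\lambda_2(P^{\tau\cup\{x\}})$ over all $x\in X^{\tau}(1)$. This is Oppenheim's theorem, and it would then be iterated downward in codimension.

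\textbf{Step 1: the one-step inequality.} Fix $\tau$ of codimension $k\geq 3$ and write $P=P^{\tau}$, $\pi=\mu_1^{\tau}$. Let $f$ be a unit eigenvector of $P$ with $\pi$-mean zero and eigenvalue $\lambda=\lambda_2(P)$.

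First, use the decomposition
\[
  \langle f, Pf\rangle_{\pi} = \E[x\sim\pi]{\langle f, P^{\tau\cup\{x\}} f\rangle_{\pi^{x}}},
  \qquad
  \pi = \E[x\sim\pi]{\pi^{x}},
\]
where $\pi^{x}=\mu_1^{\tau\cup\{x\}}$. Both identities follow from the fact that $\mu_3^{\tau}$ marginalizes consistently; they are the integrated (Bochner-type) form of the local-to-global identity.

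Next, apply the link spectral bound to $f$ restricted to each link, after subtracting its link mean. The link mean $\E[\pi^{x}]{f}$ equals $(Pf)(x)$ up to the normalizing factor: $P(x\to\cdot)$ is exactly $\pi^{x}$, since $P$ has the $1/2$ normalization and $\pi^x(y)\propto \mu_2^\tau(x,y)$. This gives
\[
  \langle f,P^{x} f\rangle_{\pi^x} \leq \lambda'\,\|f\|_{\pi^x}^2 + (1-\lambda')\,\bigl((Pf)(x)\bigr)^2 .
\]
Averaging over $x$ yields
\[
  \lambda \leq \lambda' + (1-\lambda')\lambda^2 .
\]

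Finally, irreducibility of $P^{\tau}$ together with a continuity or induction argument shows that $\lambda$ lies on the smaller root branch, namely $\lambda\le \lambda'/(1-\lambda')$. This root-selection step is the expected main obstacle. I would handle it as Oppenheim does: use $\lambda<1$ and the quadratic factoring as $(1-\lambda')\lambda^2-\lambda+\lambda'=(\lambda-1)\bigl((1-\lambda')\lambda-\lambda'\bigr)$. That factorization settles it directly, since $\lambda<1$ forces $(1-\lambda')\lambda\le\lambda'$. The hypothesis $\gamma>1-\frac{1}{n-1}$ keeps all intermediate $\lambda'<1/2$, so denominators stay positive.

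\textbf{Step 2: induction on codimension.} Set $\beta=1-\gamma$. The claim is equivalent to
\[
  \lambda_k\le \frac{\beta}{1-(k-2)\beta}.
\]
The base case $k=2$ is the hypothesis. For the inductive step, note that $x\mapsto x/(1-x)$ maps $\beta/(1-(k-2)\beta)$ to $\beta/(1-(k-1)\beta)$. Since $\beta<1/(n-1)$, all these quantities are positive and below $1$.

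\textbf{Step 3: the global bound.} Plug the link gaps
\[
  \gamma_k = \frac{1-(k-1)\beta}{1-(k-2)\beta}
\]
into \cref{eq:alev-lau}. The product over $k=2,\dots,n$ telescopes to $1-(n-1)\beta$, which gives the stated bound $\frac{1-(n-1)(1-\gamma)}{n}$.
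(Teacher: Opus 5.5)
Your proof is correct, but it runs in the opposite direction from the paper's.

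\textbf{Your route.} You establish the local bounds first, using Oppenheim's one-step inequality $\lambda\le\lambda'+(1-\lambda')\lambda^{2}$ iterated down in codimension. The factorization $(\lambda-1)\bigl((1-\lambda')\lambda-\lambda'\bigr)\ge 0$ together with $\lambda<1$ is the entire root-selection step, so no continuity or induction argument is needed there. You then get the global gap by plugging the telescoping product into \cref{eq:alev-lau}.

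\textbf{The paper's route.} The paper never uses \cref{eq:alev-lau}. It writes $\mathcal{L}=n(\mathrm{id}-P_n^{\bigtriangledown})$ as a sum of idempotents over codimension-$1$ faces. It regroups $\mathcal{L}^2-\mathcal{L}$ into blocks indexed by codimension-$2$ faces (\cref{prop:garland}) and bounds each block using only the codimension-$2$ gap (\cref{prop:local-expansion}). Summing gives $\mathcal{L}^2\succeq(1-(n-1)(1-\gamma))\mathcal{L}$ directly. The local bounds are recovered afterwards by running this argument in each link and applying \cref{lem:universality} with $\epsilon=1-(k-1)(1-\gamma)$.

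\textbf{What each buys.} Your route is the classical one. It yields the local statement directly, without the sharpened universality lemma, but it relies on the Alev--Lau theorem as a black box for the global claim. The paper's route is a single operator inequality built from codimension-$2$ data. It is also the template that carries over to \cref{thm:LO26-improved}, where the local bounds no longer telescope under \cref{eq:alev-lau} and the local-then-Alev--Lau route only gives $n^{-O(1/\epsilon)}$.

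\textbf{A minor inaccuracy.} $1-\gamma$ can be negative, since a codimension-$2$ link can have negative second eigenvalue. So the intermediate bounds need not be positive, contrary to what you state. Your argument only uses $\lambda'<1$, i.e.\ $(k-2)(1-\gamma)<1$, so nothing breaks.
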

The first claim is a result due to \cite{Oppenheim2018}; see also \cite{Gar73, BS97, Zuk03}. The second claim regarding the spectral gap of $P_{n}^{\bigtriangledown}$ is an immediate consequence of the first claim combined with \cref{eq:alev-lau}.

The second trickle-down-type theorem we will (re)prove concerns partite complexes specifically. For a face $\tau$, write $V(\tau)=\{i\in[n]:\tau\cap U_i\neq\emptyset\}$ for the parts met by $\tau$. For two distinct $u,v \in [n]$, define the \emph{spectral influence} of $u$ on $v$ by the symmetric quantity
\begin{align*}
    \mathcal{I}(u,v) = \mathcal{I}(v,u) \defeq \max_{\substack{\tau \in X(n-2) \\ u,v \notin V(\tau)}} \lambda_{2}\wrapp{P^{\tau}},
\end{align*}
with the convention that $\mathcal{I}(u,u) = 0$ for all $u \in [n]$. To avoid degeneracies, we only consider $\tau$ such that the conditional measure $\mu^{\tau}$ is not a point mass, and interpret an empty maximum as $0$. We collect these quantities into a symmetric matrix $\mathcal{I} \in \R^{n \times n}$.

\begin{theorem}\label{thm:LO26-improved}
Let $(X,\mu)$ be a pure $n$-partite complex, and assume $P^{\tau}$ is irreducible for all $\tau \in X$ of codimension at least $2$. Suppose there exists $0 < \epsilon < 1$ such that $\lambda_{\max}(\mathcal{I}) \leq 1 - \epsilon$. Then for every $2 \leq k \leq n$ and every $\tau \in X(n-k)$,
\begin{align*}
    \gamma\wrapp{P^{\tau}} \geq 1 - \frac{1 - \epsilon}{1 + \epsilon \cdot (k-2)}.
\end{align*}
Moreover,
\begin{align*}
    \gamma\wrapp{P_{n}^{\bigtriangledown}} \geq \frac{\epsilon}{n}.
\end{align*}
\end{theorem}
The first claim of \cref{thm:LO26-improved} was already established in \cite{LeakeOveisGharan2025}, building on prior work of \cite{AbdolazimiOveisGharan2023, LeakeLindbergOveisGharan2025}. Their analysis gives a global spectral-gap bound of the form $\gamma\wrapp{P_{n}^{\bigtriangledown}} \geq n^{-O(1/\epsilon)}$ \cite{LeakeOveisGharan2025}. Leake--Oveis Gharan left open the question of whether $\gamma\wrapp{P_{n}^{\bigtriangledown}}$ can be lower bounded instead by a fixed polynomial in $n$ multiplied by a function of $\epsilon$ that is independent of $n$. \cref{thm:LO26-improved} answers this affirmatively with essentially optimal parameters, and bypasses the theory of Lorentzian polynomials on cones \cite{BL26, LeakeLindbergOveisGharan2025, LeakeOveisGharan2025}.

\paragraph{Technical Overview} Our reproofs of \cref{thm:oppenheim,thm:LO26-improved} both proceed via an integrated Bochner method, which establishes a spectral gap lower bound of $\gamma$ on an irreducible Markov chain $P$ with stationary distribution $\mu$ by certifying the following spectral inequality in the inner product space $L^{2}(\mu)$:
\begin{align}\label{eq:bochner}
    \+L^{2} - \gamma \+L \succeq 0,
\end{align}
where $\+L = \id - P$ is the \emph{Laplacian} associated to $P$. Our approach to \cref{eq:bochner} for down-up walks is strongly inspired by \cite{KKO13, GoebelEtAl2026}, the latter of which uses an integrated Bochner method to reprove rapid mixing of Glauber dynamics for sampling from the hardcore model on random $d$-regular graphs beyond the uniqueness threshold. This was originally achieved in \cite{CCCYZ25} using a trickle-down-type argument for the ``negative fields localization scheme''.

For the down-up walk $P_{n}^{\bigtriangledown}$, in the spirit of Garland's method \cite{Gar73}, we observe that its Laplacian $\+L$ is $1/n$ times a sum of projection matrices corresponding to conditioning on codimension-$1$ faces. We can then group these projection matrices into blocks indexed by codimension-$2$ faces. With this decomposition in hand, we then expand the square $\+L^{2}$ and lower bound the cross terms by invoking the assumed bounds on the spectral gaps of the $P^{\tau}$ for $\tau$ with codimension-$2$. At an informal level, these local spectral gaps can be understood as a kind of ``curvature'' condition for high-dimensional simplicial complexes. Notably, we directly establish a spectral gap for $P_{n}^{\bigtriangledown}$ from the stated assumptions without invoking \cref{eq:alev-lau} nor spectral expansion of $P^{\tau}$ for $\tau$ with codimension larger than $2$. That the latter is implied by the former follows from the observation that a spectral gap for $P^{\emptyset}$ can be viewed as a Poincar\'{e} Inequality for $P_{n}^{\bigtriangledown}$ but restricted to the class of linear functions, see \cref{lem:universality}.

%\begin{remark}[Local walks via universality of spectral independence]\label{rmk:local-walks}
%    In the proofs below, it suffices to establish the global down-up spectral-gap bounds.
%    The local-walk bound in \cref{thm:LO26-improved} then follows by applying the same argument to each link and invoking the universality of spectral independence \cite[Section~3.1]{AnariEtAl2024}.
%    The sharper local-walk bound in \cref{thm:oppenheim} follows by adapting the universality argument of \cite{AnariEtAl2024} to retain the conditional-mean term and bound it using Cauchy--Schwarz.
%    We omit the detailed argument for these local-walk bounds, since the main purpose of this note is to establish the spectral gap of the global down-up walk.
%\end{remark}

\paragraph{Independent Work} We note that \cref{thm:LO26-improved} was independently proved by \cite{GuoZhang2026Planar} via an extremely similar argument. We discuss the slight difference in proof technique in \cref{rmk:GZ26-comparison} below. An even more recent paper of \cite{BLO26}, posted while we were writing this note, also recovers the same result in a more general framework.

\paragraph{AI Disclosure} The proofs were developed through a couple of rounds of interaction with GPT-5.6 Sol Ultra. ChatGPT was also used during the writing process to refine phrasing and provide editorial suggestions. The authors independently verified all mathematical claims, digested the proofs, wrote the bulk of the paper by hand, and take full responsibility for the final content.

\subsection{Preliminaries}
In this paper, every Markov chain will be reversible with respect to its (unique) stationary distribution $\mu$ and hence, self-adjoint with respect to the inner product $\langle f,g \rangle_{\mu} \defeq \E[x \sim \mu]{f(x)g(x)}$. The Loewner order $A \preceq B$ for self-adjoint operators $A,B$ is then defined with respect to this inner product, i.e. it means $\langle f, Af \rangle_{\mu} \leq \langle f, Bf \rangle_{\mu}$ for all $f \in L^{2}(\mu)$.

\begin{lemma}\label{lem:bochner}
Let $P$ be an irreducible Markov chain that is reversible with respect to its stationary distribution $\mu$, and let $\+L = \id - P$ be its Laplacian. Then $P$ has spectral gap at least $\gamma$ if and only if $\+L^{2} - \gamma \+L \succeq 0$.
\end{lemma}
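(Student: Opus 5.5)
The plan is to diagonalize. Since $P$ is reversible with respect to $\mu$, it is self-adjoint on $L^{2}(\mu)$. By the spectral theorem there is a $\langle\cdot,\cdot\rangle_{\mu}$-orthonormal eigenbasis $f_{1},\dots,f_{N}$ of $P$ with real eigenvalues $1=\lambda_{1}\geq\lambda_{2}\geq\dotsb\geq\lambda_{N}$. The top eigenvalue is $1$ with eigenvector the constant function $f_{1}\equiv 1$. Irreducibility guarantees that this eigenvalue is simple, so $\lambda_{2}<1$; this is only needed to make $\gamma(P)$ meaningful, not for the equivalence itself. The same basis diagonalizes $\+L=\id-P$, with eigenvalues $\ell_{i}=1-\lambda_{i}$. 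Since $P$ is a Markov operator, $\abs{\lambda_{i}}\leq1$, so $\ell_{i}\in[0,2]$ with $\ell_{1}=0$ and $\ell_{2}=\gamma(P)$ being the smallest eigenvalue of $\+L$ on the orthogonal complement of constants.

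Because $\+L^{2}-\gamma\+L$ is a polynomial in $\+L$, it is diagonal in the same basis with eigenvalues $\ell_{i}(\ell_{i}-\gamma)$. So $\+L^{2}-\gamma\+L\succeq0$ holds if and only if $\ell_{i}(\ell_{i}-\gamma)\geq0$ for every $i$. To see this, expand $f=\sum_{i}c_{i}f_{i}$, which gives $\langle f,(\+L^{2}-\gamma\+L)f\rangle_{\mu}=\sum_{i}c_{i}^{2}\ell_{i}(\ell_{i}-\gamma)$.

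The equivalence now reduces to a scalar check.
\begin{itemize}
\item[($\Rightarrow$)] If $\gamma(P)\geq\gamma$, then every $i\geq2$ has $\ell_{i}\geq\ell_{2}\geq\gamma$, so $\ell_{i}(\ell_{i}-\gamma)\geq0$. For $i=1$ we have $\ell_{1}=0$, so the product is $0$.
\item[($\Leftarrow$)] If every $\ell_{i}(\ell_{i}-\gamma)\geq0$, take $i=2$. We know $\ell_{2}>0$ by irreducibility, so we can divide by $\ell_{2}$ to get $\ell_{2}\geq\gamma$, that is, $\gamma(P)\geq\gamma$.
\end{itemize}
Here we may assume $\gamma\geq0$, which is the meaningful regime; for $\gamma\leq0$ both sides hold trivially since $\+L\succeq0$.

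This is where irreducibility actually enters. Without it, $\ell_{2}$ could be $0$, and then the Bochner inequality would hold vacuously on that eigenvector while the gap fails. The only delicate point is this strict positivity of $\ell_{2}$, together with noting that positive semidefiniteness is defined with respect to $\langle\cdot,\cdot\rangle_{\mu}$, which is exactly the inner product in which $P$ is self-adjoint. There is no real obstacle beyond that.
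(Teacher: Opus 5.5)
Your proof is correct and is essentially the paper's argument written in an eigenbasis. The paper restricts to the orthogonal complement of constants, uses irreducibility to make $\+L$ positive definite (hence invertible) there, and conjugates $\+L \succeq \gamma\,\id$ by $\+L^{\pm 1/2}$; this is exactly your scalar check $\ell_i(\ell_i-\gamma)\geq 0$ together with division by $\ell_2>0$.

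One small correction: your early aside that irreducibility is ``only needed to make $\gamma(P)$ meaningful, not for the equivalence itself'' is false. It also contradicts your own later, correct, remark that $\ell_2>0$ is precisely what makes the $(\Leftarrow)$ direction work.
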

\begin{proof}
The statement that $P$ has spectral gap at least $\gamma$ is equivalent to the spectral inequality $\+L \succeq \gamma \cdot \id$ restricted to the subspace $\allone^{\perp}$ endowed with inner product $\langle \cdot,\cdot \rangle_{\mu}$. On this inner product space, $\+L$ is positive definite by irreducibility and hence, is invertible. Hence, on this subspace, we have that $\+L \succeq \gamma \cdot \id$ is equivalent to $\+L^{2} \succeq \gamma \+L$ by conjugating by $\+L^{1/2}$ or $\+L^{-1/2}$ appropriately.
\end{proof}
\begin{remark}
We note that irreducibility is essential.
\end{remark}

\begin{lemma}\label{lem:universality}
Suppose $0<\epsilon\leq n$ and $\gamma\wrapp{P_{n}^{\bigtriangledown}} \geq \frac{\epsilon}{n}$. Then $\gamma\wrapp{P^{\emptyset}} \geq 1 - \frac{1-\epsilon}{1 + \epsilon \cdot (n-2)}$.
\end{lemma}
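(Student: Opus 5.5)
The plan is to test the Poincaré inequality for $P_{n}^{\bigtriangledown}$ against \emph{linear} functions $f(\sigma) \defeq \sum_{x \in \sigma} g(x)$, where $g : U \to \R$ is an eigenfunction of $P^{\emptyset}$ with eigenvalue $\lambda = \lambda_{2}(P^{\emptyset})$. Such a $g$ satisfies $\langle g,\allone\rangle_{\mu_{1}} = 0$, and we normalize $\norm{g}_{\mu_{1}} = 1$. If $\lambda \le 0$ the claimed bound is immediate, because $\frac{1-\epsilon}{1+\epsilon(n-2)}$ is nonnegative when $\epsilon \le 1$. For $\epsilon > 1$ the right-hand side is negative, and I expect the same computation below to handle that range as well. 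So I will assume $\lambda > 0$. The hypothesis $\gamma(P_{n}^{\bigtriangledown}) \ge \epsilon/n$ gives $\epsilon \Var[\mu]{f} \le n\,\langle f, \+L f\rangle_{\mu}$, and the whole proof consists of evaluating or bounding both sides in terms of $\lambda$.

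\textbf{Variance.} Expand $\E[\mu]{f^{2}}$ using $\Pr[\sigma\sim\mu]{x \in \sigma} = n\mu_{1}(x)$ and $\Pr[\sigma\sim\mu]{x,y\in\sigma} = \binom{n}{2}\mu_{2}(x,y)$. Together with $\mu_{2}(x,y) = 2\mu_{1}(x)P^{\emptyset}(x\to y)$, this gives $\E[\mu]{f^{2}} = n\langle g,(\id + (n-1)P^{\emptyset})g\rangle_{\mu_{1}}$. The mean is $\E[\mu]{f} = n\langle g,\allone\rangle_{\mu_{1}} = 0$. Hence
\[ \Var[\mu]{f} = n\,(1+(n-1)\lambda). \]

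\textbf{Dirichlet form.} Sample $\sigma \sim \mu$, remove a uniformly random $y \in \sigma$, and set $\tau = \sigma \setminus \{y\}$. Then $\langle f,\+L f\rangle_{\mu} = \E[\tau]{\Var{g(y)\mid\tau}}$, with $y \sim \mu_{1}^{\tau}$. The marginal law of $y$ is $\mu_{1}$, so this equals $1 - \E[\tau]{m(\tau)^{2}}$, where $m(\tau) \defeq \E{g(y)\mid\tau}$.

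Simply discarding $\E[\tau]{m(\tau)^{2}}$ only yields the weaker bound $\lambda \le \frac{1-\epsilon}{\epsilon(n-1)}$. The main step, and the one I expect to be the crux, is to keep this term and lower bound it by Cauchy--Schwarz against the $\tau$-measurable test function $h(\tau) \defeq \sum_{x\in\tau} g(x) = f(\sigma) - g(y)$. That is,
\[ \E[\tau]{m(\tau)^{2}} \ge \frac{\E{g(y)h(\tau)}^{2}}{\E{h(\tau)^{2}}}. \]
For the numerator, the same pair-counting as in the variance computation gives $\E{g(y)h(\tau)} = \frac{1}{n}\sum_{x\ne y}(\cdots) = (n-1)\lambda$. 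For the denominator, $\tau \sim \mu_{n-1}$, and the identical computation one level down gives $\E{h^{2}} = (n-1)(1+(n-2)\lambda)$. Here I need to check that the pair marginals of $\mu_{n-1}$ are again governed by $\mu_{2}$, which holds by consistency of the marginals. Altogether,
\[ \langle f,\+Lf\rangle_{\mu} \le 1 - \frac{(n-1)\lambda^{2}}{1+(n-2)\lambda} = \frac{(1-\lambda)(1+(n-1)\lambda)}{1+(n-2)\lambda}. \]

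\textbf{Conclusion.} Substituting both estimates into $\epsilon\,n(1+(n-1)\lambda) \le n\,\langle f,\+Lf\rangle_{\mu}$ and cancelling the positive factor $1+(n-1)\lambda$ gives
\[ \epsilon \le \frac{1-\lambda}{1+(n-2)\lambda}, \qquad\text{i.e.}\qquad \lambda \le \frac{1-\epsilon}{1+\epsilon(n-2)}. \]
This is exactly the claim $\gamma(P^{\emptyset}) \ge 1 - \frac{1-\epsilon}{1+\epsilon(n-2)}$.
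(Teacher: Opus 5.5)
Your argument is the paper's own proof. It uses the same linear lift $f(\sigma)=\sum_{x\in\sigma}g(x)$, the same rewriting of the Dirichlet form as an expected conditional variance, the same Cauchy--Schwarz against $h(\tau)=\sum_{x\in\tau}g(x)$, and the same algebra, and all your computations are correct.

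The one real gap is the case split at the start. Discarding $\lambda\le 0$ is only legitimate for $\epsilon\le 1$. For $\epsilon\in(1,n]$ the lemma asserts that $\lambda$ lies below the negative number $\frac{1-\epsilon}{1+\epsilon(n-2)}$, and this range is not vacuous. For example, the uniform measure on $\binom{[n+1]}{n}$ has $\gamma(P_n^{\bigtriangledown})=\frac{n+1}{2n}$, i.e.\ $\epsilon=\frac{n+1}{2}$, and $\lambda_2(P^{\emptyset})=-\frac{1}{n}$, which meets the bound with equality. You only say you ``expect'' the computation to cover this range. There, both of your divisions are no longer obviously by positive quantities: the division by $\E{h(\tau)^2}=(n-1)(1+(n-2)\lambda)$ in Cauchy--Schwarz, and the cancellation of $1+(n-1)\lambda$ at the end.

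The fix, which is what the paper does, is to drop the assumption $\lambda>0$ and get a lower bound on $\lambda$ from your own variance identity. From $0\le \Var[\mu]{f}=n(1+(n-1)\lambda)$ you get $\lambda\ge-\frac{1}{n-1}$, hence $1+(n-2)\lambda>0$, so the Cauchy--Schwarz step is valid for every $\lambda$.
\begin{itemize}
\item If $1+(n-1)\lambda>0$, you cancel exactly as you did.
\item If $1+(n-1)\lambda=0$, the Poincar\'e inequality carries no information. You must check directly that $-\frac{1}{n-1}\le\frac{1-\epsilon}{1+\epsilon(n-2)}$, and this is equivalent to $\epsilon\le n$.
\end{itemize}
That boundary case is the only place the hypothesis $\epsilon\le n$ enters. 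Your write-up never uses that hypothesis, which is a sign that this case was skipped.
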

\begin{remark}
\cite[Section~3.1]{AnariEtAl2024} proved the above with denominator $\epsilon(n-1)$ in place of $1+\epsilon(n-2)$. The above gives a slight sharpening that makes the two claims of \cref{thm:oppenheim} quantitatively consistent with each other. A short proof is provided in \cref{sec:universality-proof} to keep this note self-contained.
\end{remark}

\section{On Oppenheim's Trickle-Down Theorem}\label{sec:oppenheim}
In this section, we reprove \cref{thm:oppenheim}. Throughout this section, let $(X,\mu)$ be a pure weighted simplicial complex whose facets have cardinality $n$; we do not assume $X$ is $n$-partite. For every codimension-$1$ face $\tau \in X(n-1)$, define the conditional expectation operator
\begin{align*}
    (Q_\tau f)(\sigma) \defeq \*1[\sigma\supseteq \tau]\cdot \E[\eta\sim \mu^{\tau}]{f(\eta)} + \*1[\sigma\not\supseteq \tau] \cdot f(\sigma).
\end{align*}
This is a matrix that encodes all the transitions of the down-up walk $P_{n}^{\bigtriangledown}$ that ``go through'' $\tau$. We then define its Laplacian $\+L_\tau \defeq \id - Q_\tau$, and let
\begin{align*}
    \+L \defeq \sum_{\tau \in X(n-1)} \+L_\tau.
\end{align*}
We have the following elementary facts.
\begin{observation}\label{obs:L-down-up}
For all $\tau \in X(n-1)$, we have $Q_\tau^2 = Q_\tau$ and $\+L_\tau^2 = \+L_\tau$. Furthermore, $\+L = n\tp{\id - P_n^{\bigtriangledown}}$.
\end{observation}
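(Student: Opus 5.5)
We verify both claims directly from the definition of $Q_\tau$. This is a finite-dimensional computation in $L^{2}(\mu)$.

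\emph{Idempotence.} Fix $\tau \in X(n-1)$ and set $S_\tau \defeq \{\sigma \in X(n) : \sigma \supseteq \tau\}$. The operator $Q_\tau$ replaces $f$ on $S_\tau$ by the constant $c_\tau(f) \defeq \E[\eta\sim\mu^{\tau}]{f(\eta)}$ and leaves $f$ unchanged off $S_\tau$. Applying $Q_\tau$ again does nothing new. Off $S_\tau$ the values are still unchanged. On $S_\tau$ we now average the function that is identically $c_\tau(f)$ there. Since $\mu^{\tau}$ is supported on $S_\tau$, this average is again $c_\tau(f)$. Hence $Q_\tau^2 = Q_\tau$. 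Then
\[
\+L_\tau^2 = \id - 2Q_\tau + Q_\tau^2 = \id - Q_\tau = \+L_\tau .
\]
Although it is not asked, $Q_\tau$ is also self-adjoint in $L^{2}(\mu)$, because $\mu^{\tau}$ is $\mu$ conditioned on $S_\tau$. So $Q_\tau$ and $\+L_\tau$ are orthogonal projections, which is the form used later.

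\emph{Identification of $\+L$.} We compute $(\+L f)(\sigma)$ for a fixed facet $\sigma$. The operator $\+L_\tau f(\sigma) = f(\sigma) - (Q_\tau f)(\sigma)$ vanishes unless $\tau \subseteq \sigma$. The codimension-$1$ faces contained in $\sigma$ are exactly $\sigma \setminus \{x\}$ for $x \in \sigma$, and there are $n$ of them. Therefore
\[
(\+L f)(\sigma) = n f(\sigma) - \sum_{x \in \sigma} \E[\eta \sim \mu^{\sigma\setminus\{x\}}]{f(\eta)}.
\]

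It remains to match this with the down-up walk. By definition, $P_n^{\bigtriangledown}$ removes a uniformly random $x \in \sigma$ and then adds $y$ with probability $\mu_1^{\sigma\setminus\{x\}}(y)$. The one point needing care is that adding $y$ with this probability yields $\eta = (\sigma\setminus\{x\}) \cup \{y\}$ with probability exactly $\mu^{\sigma\setminus\{x\}}(\eta)$.

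To see this, let $\tau = \sigma \setminus\{x\}$, a face of size $n-1$. The link $X^{\tau}$ has facets of size $1$, so its facets are the singletons $\{y\}$. The marginal $\mu_1^{\tau}$ is then the facet measure itself, with the normalization $\binom{1}{1} = 1$. Hence $\mu_1^{\tau}(y) = \mu^{\tau}(\tau\cup\{y\})$.

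It follows that
\[
(P_n^{\bigtriangledown} f)(\sigma) = \frac1n \sum_{x\in\sigma} \E[\eta\sim\mu^{\sigma\setminus\{x\}}]{f(\eta)}.
\]
Comparing with the formula for $(\+L f)(\sigma)$ gives $\+L = n(\id - P_n^{\bigtriangledown})$.

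There is no real obstacle here. The only subtle points are that $\mu^{\tau}$ is supported on $S_\tau$ and the normalization of $\mu_1^{\tau}$ for a codimension-$1$ face $\tau$, both handled above.
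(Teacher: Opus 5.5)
Your proposal is correct and follows essentially the same route as the paper. You get idempotence directly from the definition of $Q_\tau$, then sum $(\mathcal{L}_\tau f)(\sigma)$ over the $n$ codimension-$1$ faces $\sigma\setminus\{x\}$ of $\sigma$ and match the result with the down-up transition; beyond the paper, you also make explicit that $\mu^{\tau}$ is supported on facets containing $\tau$ and that $\mu_1^{\tau}(y)=\mu^{\tau}(\tau\cup\{y\})$, both of which the paper leaves implicit.
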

\begin{proof}
The first claim is immediate from the definition of $Q_{\tau}$. For the second claim, we have
\begin{align*}
    (\+L f)(\sigma)
    &= \sum_{\tau \in X(n-1)} (\+L_\tau f)(\sigma) \\
    &= \sum_{\tau \in X(n-1): \tau \subseteq \sigma} \tp{f(\sigma) - \E[\eta\sim \mu^{\tau}]{f(\eta)}} \\
    &= n\cdot f(\sigma) - n \cdot \frac{1}{n}\sum_{\tau \in X(n-1): \tau \subseteq \sigma} \E[\eta\sim \mu^{\tau}]{f(\eta)} \\
    &= n \wrapp{\wrapp{\id - P_n^{\bigtriangledown}} f}(\sigma),
\end{align*}
where the second equation follows by $(\+L_\tau f)(\sigma) = 0$ for $\sigma\not\supseteq \tau$.
\end{proof}

Next, we decompose $\+L^{2} - \+L$ in the spirit of Garland's method \cite{Gar73}.
\begin{proposition} \label{prop:garland}
    Suppose $n \geq 2$. For every $\alpha \in X(n-2)$, define
    \begin{align*}
        \+L_{\alpha} \defeq \sum_{\tau \in X(n-1) : \tau \supseteq \alpha} \+L_\tau.
    \end{align*}
    Then, it holds that 
    \begin{align*}
        \+L^2 - \+L = \sum_{\alpha \in X(n-2)} \tp{\+L_{\alpha}^2 - \+L_{\alpha}}.
    \end{align*}
\end{proposition}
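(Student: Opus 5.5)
We prove the identity by expanding both squares into pairs of codimension-$1$ faces and matching terms. Write
\begin{align*}
    \+L^2 = \sum_{\tau \in X(n-1)} \+L_\tau^2 + \sum_{\tau \neq \tau'} \+L_\tau \+L_{\tau'},
\end{align*}
where the second sum ranges over ordered pairs of distinct codimension-$1$ faces. Since $\+L_\tau^2 = \+L_\tau$ by \cref{obs:L-down-up}, the diagonal sum equals $\+L$. Hence
\begin{align*}
    \+L^2 - \+L = \sum_{\tau \neq \tau'} \+L_\tau \+L_{\tau'}.
\end{align*}
The same computation applied to $\+L_\alpha$ gives
\begin{align*}
    \+L_\alpha^2 - \+L_\alpha = \sum_{\tau \neq \tau',\ \tau,\tau' \supseteq \alpha} \+L_\tau \+L_{\tau'}.
\end{align*}
Summing over $\alpha \in X(n-2)$ therefore produces the cross terms indexed by triples $(\alpha,\tau,\tau')$ with $\alpha \subseteq \tau \cap \tau'$.

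Two facts remain, and both are combinatorial. First, fix distinct $\tau,\tau' \in X(n-1)$. Then $|\tau \cap \tau'| \leq n-2$, with equality exactly when $\alpha \defeq \tau\cap\tau'$ lies in $X(n-2)$. In that case $\alpha$ is the \emph{unique} codimension-$2$ face contained in both. So each such pair is counted exactly once on the right-hand side.

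Second, if $|\tau \cap \tau'| < n-2$, we must show $\+L_\tau \+L_{\tau'} = 0$. The image of $\+L_{\tau'}$ consists of functions supported on facets containing $\tau'$. The operator $\+L_\tau$ annihilates functions vanishing on facets containing $\tau$. Indeed, for such $g$, $(\+L_\tau g)(\sigma) = 0$ when $\sigma \not\supseteq \tau$ by definition. When $\sigma \supseteq \tau$, we get $g(\sigma) - \E[\mu^\tau]{g} = 0$.

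It therefore suffices to check that no facet contains both $\tau$ and $\tau'$. If a facet $\sigma \supseteq \tau \cup \tau'$, then $|\tau\cup\tau'| \leq n$. This forces $|\tau \cap \tau'| \geq 2(n-1) - n = n-2$, a contradiction. Hence $\+L_{\tau'} f$ vanishes on every facet containing $\tau$, and so $\+L_\tau \+L_{\tau'} f = 0$.

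The main (mild) obstacle is this support argument: making sure that the vanishing of cross terms uses the correct support of $\+L_{\tau'} f$. The vanishing also covers the case $\tau \cap \tau' \notin X$, which cannot actually occur since $X$ is downward closed, but the argument does not need this. Combining the two facts, both sides equal $\sum \+L_\tau \+L_{\tau'}$ over ordered pairs of distinct $\tau,\tau'$ with $|\tau\cap\tau'| = n-2$, proving \cref{prop:garland}.
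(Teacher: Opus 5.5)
Your proof is correct and follows essentially the same route as the paper. You expand $\mathcal{L}^2$ into diagonal terms, which reproduce $\mathcal{L}$ by idempotence, and cross terms. You then discard the cross terms whose two codimension-$1$ faces lie in no common facet, and regroup the survivors by the unique $\alpha = \tau\cap\tau' \in X(n-2)$. The only difference is that you spell out the support argument for the vanishing cross terms, which the paper states in one line.
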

\begin{proof}
    By definition of $\+L$, we have
    \begin{align*}
        \+L^2 - \+L
        &= \sum_{\substack{\sigma,\tau \in X(n-1) \\ \abs{\sigma \oplus \tau} \geq 2}} \+L_\sigma \+L_\tau
        = \sum_{\substack{\sigma,\tau \in X(n-1) \\ \abs{\sigma \oplus \tau} = 2}} \+L_\sigma \+L_\tau,
    \end{align*}
    where the last equation follows by the fact that if $\abs{\sigma\oplus \tau} \geq 3$, then there is no $\eta \in X(n)$ such that both $\eta\supseteq \sigma$ and $\eta \supseteq \tau$ hold. Now, note that for every $\sigma,\tau \in X(n-1)$ such that $\abs{\sigma\oplus\tau} = 2$, there is a unique $\alpha \in X(n-2)$ such that $\sigma,\tau \supseteq \alpha$. Hence, it is straightforward to regroup the right hand side by $\alpha \in X(n-2)$ and finish the proof.
\end{proof}

\begin{proposition} \label{prop:local-expansion}
    Let $\alpha \in X(n-2)$. Assume the down-up walk on $\mu^{\alpha}$ is irreducible.
    The following inequality holds in $L^2(\mu)$ (and restricts to $L^2(\mu^{\alpha})$),
    \begin{align*}
        \+L_{\alpha}^2 \succeq \gamma\wrapp{P^{\alpha}} \cdot \+L_{\alpha}.
    \end{align*}
\end{proposition}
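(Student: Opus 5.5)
The plan is to reduce the inequality to a statement about a single link and then to a spectral comparison between a down-up walk on edges and the local walk $P^{\alpha}$. Since $\+L_{\alpha}$ is self-adjoint in $L^2(\mu)$, it suffices to show that every eigenvalue $\ell$ of $\+L_{\alpha}$ satisfies $\ell^2 \geq \gamma\wrapp{P^{\alpha}}\,\ell$. Because $\+L_\alpha \succeq 0$ as a sum of projections, this means: every eigenvalue is either $0$ or at least $\gamma\wrapp{P^{\alpha}}$.

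\textbf{Step 1: localize to the link of $\alpha$.} For $\tau \supseteq \alpha$, $(\+L_\tau f)(\sigma)$ vanishes unless $\sigma \supseteq \tau$, and it depends only on $f$ restricted to facets containing $\tau$. Hence $\+L_{\alpha}$ vanishes on functions supported on facets not containing $\alpha$, and it preserves the subspace of functions supported on $\{\sigma \supseteq \alpha\}$. This gives an orthogonal decomposition of $L^2(\mu)$ into two invariant pieces. On the second piece, the inner product is a positive multiple of $\langle\cdot,\cdot\rangle_{\mu^{\alpha}}$. So the claim reduces to $L^2(\mu^{\alpha})$, which also explains the parenthetical remark in the statement.

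\textbf{Step 2: identify $\+L_{\alpha}$ on the link.} The link $(X^{\alpha},\mu^{\alpha})$ is a pure complex whose facets are edges $\{x,y\}$ of the $1$-skeleton of the link. The faces $\tau \supseteq \alpha$ of codimension $1$ correspond to vertices $x$ of this link. Applying \cref{obs:L-down-up} to the link (with $n=2$) gives
\[
\+L_{\alpha} = 2\wrapp{\id - D},
\]
where $D$ is the down-up walk $P_2^{\bigtriangledown}$ on $\mu^{\alpha}$.

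\textbf{Step 3: compare $D$ with $P^{\alpha}$.} Write $D = U D_{\downarrow}$ as the composition of a down operator from edges to vertices and an up operator from vertices to edges. The nonzero spectrum of $D$ then coincides with that of the vertex up-down walk $D_{\downarrow}U$. I will compute $D_{\downarrow}U$ directly. From vertex $x$, the walk moves up to the edge $\{x,y\}$ with probability $\mu^{\alpha}_2(x,y)\big/\wrapp{2\mu_1^{\alpha}(x)} = P^{\alpha}(x\to y)$, and then drops one of the two endpoints uniformly at random. This yields
\[
D_{\downarrow}U = \tfrac12\wrapp{\id + P^{\alpha}}.
\]

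\textbf{Step 4: read off the eigenvalues.} The eigenvalue $1$ of $D$ is simple, because the down-up walk on $\mu^{\alpha}$ is assumed irreducible (equivalently, $P^{\alpha}$ is irreducible). It corresponds to constants, where $\+L_{\alpha}$ has eigenvalue $0$. The eigenvalue $0$ of $D$ gives $\ell = 2 \geq \gamma\wrapp{P^{\alpha}}$. Every remaining eigenvalue of $D$ is at most $\tfrac12\wrapp{1+\lambda_2(P^{\alpha})}$, so
\[
\ell \geq 2\cdot\tfrac12\wrapp{1-\lambda_2(P^{\alpha})} = \gamma\wrapp{P^{\alpha}}.
\]
This gives $\+L_{\alpha}^2 \succeq \gamma\wrapp{P^{\alpha}}\,\+L_{\alpha}$.

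The main obstacle is Step 3, and in particular keeping the normalizations consistent. The factor $\tfrac12$ in $P^{\alpha}$ and the $\binom{n}{k}$ normalization of the marginals must combine so that the up step is exactly $P^{\alpha}$. One also has to check that the adjointness $U = D_{\downarrow}^*$ holds in the correct weighted spaces, so that the spectra of $UD_{\downarrow}$ and $D_{\downarrow}U$ agree away from $0$.
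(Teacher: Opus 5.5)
Your proposal is correct and follows essentially the same route as the paper. You localize to the link $(X^{\alpha},\mu^{\alpha})$, identify $\+L_{\alpha} = 2(\id - P_2^{\bigtriangledown})$, and use that the edge down-up walk and the vertex up-down walk $\frac{1}{2}(\id + P^{\alpha})$ share their nonzero spectrum. The only difference is cosmetic: you check $\ell^2 \geq \gamma\wrapp{P^{\alpha}}\,\ell$ eigenvalue by eigenvalue instead of invoking \cref{lem:bochner}, which also covers the point-mass case that the paper treats separately.
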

\begin{proof}
    The block operator vanishes outside facets containing $\alpha$, so we focus on $(X^{\alpha}, \mu^{\alpha})$ and the space $L^2(\mu^{\alpha})$. If $\mu^{\alpha}$ is a point mass, the claim is trivial. Now, let $P_2^{\bigtriangledown}$ denote the down-up walk on this specific link $(X^{\alpha},\mu^{\alpha})$. By \Cref{obs:L-down-up}, $\+L_{\alpha} = 2(\id - P_2^\bigtriangledown)$ and hence, the target inequality is equivalent to 
    \begin{align*}
        \wrapp{\id - P_2^\bigtriangledown}^{2} \succeq \frac{\gamma\wrapp{P^{\alpha}}}{2} \cdot \wrapp{\id - P_2^\bigtriangledown}.
    \end{align*}
    Since the down-up walk is irreducible, by \cref{lem:bochner}, it suffices to show that
    \begin{align*}
        \lambda_2\wrapp{P_2^\bigtriangledown} &= \frac{1}{2} + \frac{1}{2} \lambda_2(P^{\alpha}).
    \end{align*}
    In fact, $P_{2}^{\bigtriangledown}$ and $\frac{1}{2} \cdot \wrapp{\id + P^{\alpha}}$ have the exact same spectrum up to the multiplicity of the zero eigenvalue. Indeed, let $D^{\alpha}$ denote the transition matrix that encodes the following action: Given $\sigma \in X(n)$ satisfying $\sigma \supseteq \alpha$, remove a uniformly random element in $\sigma \setminus \alpha$ from $\sigma$ to obtain $\tau \in X(n-1)$. Let $U^{\alpha}$ denote the transition matrix of the adjoint action: Given $\tau \in X(n-1)$ satisfying $\tau \supseteq \alpha$, sample a random element from $\mu_{1}^{\tau}$ and add it to $\tau$ to obtain $\sigma \in X(n)$. It is straightforward to verify that $P_{2}^{\bigtriangledown} = D^{\alpha}U^{\alpha}$ and $\frac{1}{2} \cdot \wrapp{\id + P^{\alpha}} = U^{\alpha}D^{\alpha}$.
\end{proof}

\begin{proof}[Proof of \cref{thm:oppenheim}]
    Let us focus first on lower bounding the spectral gap of the global down-up walk $P_{n}^{\bigtriangledown}$. Irreducibility of all local walks $P^{\tau}$ ensures irreducibility of $P_{n}^{\bigtriangledown}$. Hence, by \cref{lem:bochner} and \cref{obs:L-down-up}, it suffices to show that $\+L^2 \succeq (1 - (n-1) \cdot (1 - \gamma)) \cdot \+L$. For this, we have
    \begin{align*}
        \+L^2 - \+L
        &= \sum_{\alpha \in X(n-2)} \tp{\+L_{\alpha}^2 - \+L_{\alpha}} \tag{\cref{prop:garland}} \\
        &\succeq -(1 - \gamma) \sum_{\alpha \in X(n-2)} \+L_{\alpha} \tag{\cref{prop:local-expansion}} \\
        &= -(n-1) \cdot (1 - \gamma) \cdot \+L,
    \end{align*}
    where in the last equation, we use the fact that for every $\sigma \in X(n-1)$, there are exactly $(n-1)$ distinct $\alpha \in X(n-2)$ with $\alpha \subseteq \sigma$. This proves $\gamma\wrapp{P_{n}^{\bigtriangledown}} \geq \frac{1 - (n-1) \cdot (1 - \gamma)}{n}$ as desired.

    Regarding the first claim on spectral gaps of local walks, fix $\tau \in X(n-k)$ where $2 \leq k \leq n$. Applying the preceding argument mutatis mutandis to the link $(X^{\tau},\mu^{\tau})$ shows that its global down-up walk has spectral gap at least $\frac{1 - (k-1) \cdot (1 - \gamma)}{k}$. Applying \cref{lem:universality} with $\epsilon=1-(k-1)(1-\gamma)\in(0,k]$ then completes the proof.
\end{proof}

\section{Rapid Mixing via Pairwise Spectral Influence}\label{sec:LO25}
In this section, we reprove \cref{thm:LO26-improved}. Throughout this section, let $(X,\mu)$ denote a pure weighted $n$-partite simplicial complex. For each $u\in[n]$, let $P_u$ be the heat-bath projection that resamples coordinate $u$:
\begin{align*}
    (P_u f)(\sigma) \defeq \E[\eta\sim\mu^{\sigma\setminus U_u}]{f(\eta)},
    \qquad \+L_u \defeq \id-P_u.
\end{align*}
The unnormalized Laplacian from \cref{sec:oppenheim} is
$\+L=\sum_{u\in[n]}\+L_u=n(\id-P_n^{\bigtriangledown})$. The key lemma is the following.
\begin{lemma} \label{lem:LO26-local-mixing}
  For every $u \neq v$ in $[n]$, every $\sigma \in X(n-2)$ such that $u,v \not\in V(\sigma)$, and every function $f$,
  \begin{align*}
    \inner{\+L_u f}{\+L_v f}_{\mu^{\sigma}} \geq -\lambda_2(P^{\sigma}) \cdot \norm{\+L_u f}_{\mu^{\sigma}, 2} \cdot \norm{\+L_v f}_{\mu^{\sigma}, 2}.
  \end{align*}
\end{lemma}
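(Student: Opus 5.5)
The plan is to work entirely inside the link of $\sigma$ and reduce the claim to the geometry of two subspaces of $L^2(\mu^{\sigma})$. Since $u,v\notin V(\sigma)$ and $\sigma\in X(n-2)$, a facet of the link is a pair $(a,b)$ with $a\in U_u$ and $b\in U_v$, drawn from a joint law $\pi=\mu^{\sigma}$. On this space $P_u$ is the conditional expectation $\Pi_B f=\E{f\mid b}$, and $P_v$ is $\Pi_A f=\E{f\mid a}$. Both are orthogonal projections, onto the subspace $B$ of functions of $b$ and the subspace $A$ of functions of $a$ respectively. Thus $g\defeq\+L_u f=(\id-\Pi_B)f$ lies in $B^{\perp}$ and $h\defeq\+L_v f=(\id-\Pi_A)f$ lies in $A^{\perp}$. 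Replacing $f$ by $f-\E[\pi]{f}$ changes neither $g$ nor $h$, so we may assume $f\perp\allone$.

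\textbf{Step 1: identify $\lambda_2(P^{\sigma})$ with the cosine of the Friedrichs angle between $A$ and $B$.} The local walk $P^{\sigma}$ lives on $U_u\sqcup U_v$, and we have $\mu^{\sigma}_1(x)=\tfrac12\pi_a(x)$ and $\mu^{\sigma}_2(x,y)=\pi(x,y)$. Hence $P^{\sigma}$ is the bipartite walk that moves from $a$ to $b$ with probability $\pi(b\mid a)$, and from $b$ to $a$ with probability $\pi(a\mid b)$. Its spectrum is $\pm$ the singular values of the operator $K:L^2(\pi_b)\to L^2(\pi_a)$, $\psi\mapsto\E{\psi(b)\mid a}$. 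Therefore $\rho\defeq\lambda_2(P^{\sigma})$ equals the largest singular value of $K$ restricted to mean-zero functions. This is exactly $\sup\{\inner{\phi}{\psi}_{\pi}:\phi\in A\ominus\allone,\ \psi\in B\ominus\allone,\ \norm{\phi}=\norm{\psi}=1\}$. Irreducibility gives $\rho<1$, so $A\cap B=\mathrm{span}(\allone)$.

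\textbf{Step 2: block-decompose via principal angles (the CS decomposition of a pair of projections).} The space $\allone^{\perp}$ splits orthogonally into pieces invariant under both $\Pi_A$ and $\Pi_B$:
\begin{itemize}
  \item the parts $A\cap B^{\perp}$, $A^{\perp}\cap B$ and $A^{\perp}\cap B^{\perp}$;
  \item a family of two-dimensional blocks $W_i$ (or a direct integral, but everything is finite-dimensional here). In each $W_i$, $A$ and $B$ meet $W_i$ in lines $\ell_A^i,\ell_B^i$ at angle $\theta_i$, with $c_i=\cos\theta_i\le\rho$.
\end{itemize}
Because every piece is invariant under both projections, $g$ and $h$ decompose blockwise, and $\inner{g}{h}=\sum_{\text{pieces}}\inner{g_i}{h_i}$ with no cross terms. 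On the three degenerate parts, at least one of $g_i,h_i$ vanishes or they are equal. In $A\cap B^{\perp}$ we get $h_i=0$; in $A^{\perp}\cap B$ we get $g_i=0$; in $A^{\perp}\cap B^{\perp}$ we get $g_i=h_i$, so $\inner{g_i}{h_i}\ge0$. On a two-dimensional block, let $a_i^{\perp},b_i^{\perp}$ be unit normals in $W_i$ to $\ell_A^i,\ell_B^i$. Then $g_i=\inner{f}{b_i^{\perp}}b_i^{\perp}$ and $h_i=\inner{f}{a_i^{\perp}}a_i^{\perp}$, and $\abs{\inner{a_i^{\perp}}{b_i^{\perp}}}=c_i$. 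So $\inner{g_i}{h_i}\ge -c_i\norm{g_i}\norm{h_i}\ge-\rho\norm{g_i}\norm{h_i}$.

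\textbf{Step 3: combine.} Summing the blocks and applying Cauchy--Schwarz gives
\begin{align*}
\inner{g}{h}\ \ge\ -\rho\sum_i\norm{g_i}\norm{h_i}\ \ge\ -\rho\,\norm{g}\,\norm{h}.
\end{align*}
This is the claim.

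The main obstacle is Step 2: carefully justifying the simultaneous block decomposition for a pair of orthogonal projections, and checking that the relevant cosines are bounded by the nontrivial singular value from Step 1. Constants must be excluded, which is where irreducibility and the reduction $f\perp\allone$ are used. If invoking the CS decomposition feels heavy, I would instead try to argue directly with the maximal-correlation characterization in Step 1. Expanding $\inner{g}{h}=\norm{f}^2-\norm{\Pi_Af}^2-\norm{\Pi_Bf}^2+\inner{\Pi_Af}{\Pi_Bf}$ and bounding the last term by $\rho\norm{\Pi_Af}\norm{\Pi_Bf}$ is not by itself sufficient. One also needs the constraint that $\norm{f}^2$ dominates the squared norm of the projection of $f$ onto $A+B$, and that constraint is exactly what the two-dimensional reduction encodes.
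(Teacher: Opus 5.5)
Your argument is correct, and it takes a different route from the paper's.

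\textbf{What the two proofs share.} Both isolate the same degenerate piece. The paper's $K=\Ker P_u\cap\Ker P_v$ is your $A^{\perp}\cap B^{\perp}$. It contributes $\|f^K\|^2\ge0$, which is exactly the ``$\|f\|^2$ dominates the projection onto $A+B$'' constraint you mention at the end. Its complement $K^{\perp}=A+B$ is the space of linear functions.

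\textbf{Where they differ.} On $A+B$ the paper does not diagonalize the pair $(\Pi_A,\Pi_B)$. It quotes the two-coordinate Bochner inequality of \cref{prop:local-expansion}, obtained from the $DU$/$UD$ spectral identity. That inequality only gives the arithmetic-mean bound $\langle\mathcal{L}_uf,\mathcal{L}_vf\rangle\ge-\lambda_2(P^{\sigma})\cdot\tfrac12(\|\mathcal{L}_uf\|^2+\|\mathcal{L}_vf\|^2)$. The paper then writes $f=f_A+f_B$ with $f_A\in A$, $f_B\in B$, and notes $\mathcal{L}_uf=\mathcal{L}_uf_A$ and $\mathcal{L}_vf=\mathcal{L}_vf_B$. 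Optimizing over the rescaling $f_A\mapsto tf_A$, $f_B\mapsto f_B/t$ turns the arithmetic mean into the geometric mean. You instead identify $\lambda_2(P^{\sigma})$ directly with the maximal correlation, via the singular values of the bipartite walk. This is the largest nontrivial principal-angle cosine between $A$ and $B$. You then let the Halmos/CS two-subspace decomposition reduce everything to plane geometry.

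\textbf{What each route buys.} Yours makes the meaning and sharpness of the constant transparent. It is close in spirit to the Hirschfeld--Gebelein--R\'enyi argument of Guo--Zhang (\cref{rmk:GZ26-comparison}). The paper's needs no structure theorem for pairs of projections, only the scaling trick, and it stays within the Bochner framework of \cref{sec:oppenheim}.

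Your Step~2 is not a real obstacle in finite dimensions. Diagonalize $\Pi_A\Pi_B\Pi_A$ on $A\ominus[(A\cap B)\oplus(A\cap B^{\perp})]$ with eigenpairs $(c_i^2,\phi_i)$, and set $W_i=\mathrm{span}(\phi_i,\Pi_B\phi_i)$.

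\textbf{Two small bookkeeping points.} First, the lemma does not assume irreducibility, so you need to handle the reducible case, where $\lambda_2(P^{\sigma})=1$. Either dispatch it by Cauchy--Schwarz, or keep $(A\cap B)\ominus\mathrm{span}(\mathbf{1})$ as an extra piece on which $g_i=h_i=0$. Second, if $\mu^{\sigma}$ is a point mass, your identification $\rho=\lambda_2(P^{\sigma})$ fails, since then $\lambda_2(P^{\sigma})=-1$. In that case $g=h=0$, so the claim is trivial.
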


\begin{corollary}\label{cor:LO26-local-mixing}
For every $u \neq v$ in $[n]$ and every function $f$,
\begin{align*}
      \inner{\+L_{u} f}{\+L_{v} f}_{\mu} \geq -\mathcal{I}(u,v) \cdot \norm{\+L_{u}f}_{\mu,2} \cdot \norm{\+L_{v}f}_{\mu,2}.
  \end{align*}
\end{corollary}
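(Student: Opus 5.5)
The plan is to deduce \cref{cor:LO26-local-mixing} from \cref{lem:LO26-local-mixing} by averaging over the codimension-$2$ face obtained by deleting coordinates $u$ and $v$, and then applying Cauchy--Schwarz.

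Fix $u \neq v$. For $\eta \in X(n)$ let $\sigma(\eta) \defeq \eta \setminus (U_u \cup U_v) \in X(n-2)$. This face satisfies $u,v \notin V(\sigma)$. Sampling $\eta \sim \mu$ is the same as first sampling $\sigma$ from the marginal law of $\sigma(\eta)$ and then sampling $\eta \sim \mu^{\sigma}$. By the tower property, for any functions $g,h$,
\begin{align*}
  \inner{g}{h}_{\mu} = \E[\sigma]{\inner{g}{h}_{\mu^{\sigma}}}.
\end{align*}
We apply this with $g = \+L_u f$ and $h = \+L_v f$, viewing them as functions on the facets containing $\sigma$. Note that $(P_u f)(\eta)$ depends only on $\eta \setminus U_u$, which contains $\sigma$. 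So the restriction of $\+L_u f$ to $\{\eta \supseteq \sigma\}$ is exactly $\+L_u$ (the heat-bath operator of the link $(X^{\sigma},\mu^{\sigma})$) applied to the restriction of $f$, and likewise for $\+L_v$. This is the sense in which \cref{lem:LO26-local-mixing} is stated.

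Next, bound the local factors. For each $\sigma$ in the support, \cref{lem:LO26-local-mixing} gives
\begin{align*}
  \inner{\+L_u f}{\+L_v f}_{\mu^{\sigma}} \geq -\lambda_2(P^{\sigma}) \, a_\sigma b_\sigma,
\end{align*}
where $a_\sigma = \norm{\+L_u f}_{\mu^{\sigma},2}$ and $b_\sigma = \norm{\+L_v f}_{\mu^{\sigma},2}$. We then replace $\lambda_2(P^{\sigma})$ by $\mathcal{I}(u,v)$, which requires $\lambda_2(P^{\sigma}) \leq \mathcal{I}(u,v)$. Two kinds of degenerate $\sigma$ are excluded from the maximum defining $\mathcal{I}$:
\begin{itemize}
  \item If $\mu^{\sigma}$ is a point mass, then $\+L_u f$ vanishes on the unique facet, so $a_\sigma = 0$ and the local term is $0$.
  \item In general, the facet containing $\sigma$ should be checked to be a genuine codimension-$2$ face with two free coordinates; this holds since $\sigma$ misses exactly the parts $u$ and $v$.
\end{itemize}
There is one further subtlety. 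If $\lambda_2(P^{\sigma})$ were negative, replacing it by $\mathcal{I}(u,v) \geq \lambda_2(P^{\sigma})$ still gives a valid, weaker inequality, because $a_\sigma b_\sigma \geq 0$. So in all cases
\begin{align*}
  \inner{\+L_u f}{\+L_v f}_{\mu^{\sigma}} \geq -\mathcal{I}(u,v)\, a_\sigma b_\sigma.
\end{align*}

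Finally, average over $\sigma$. We have $\mathcal{I}(u,v) \geq 0$: it is a maximum of second eigenvalues, and if it were negative the bound would only improve, but in any case the empty-maximum convention gives $0$. Taking expectation over $\sigma$ and using Cauchy--Schwarz in the form $\E{a_\sigma b_\sigma} \leq \sqrt{\E{a_\sigma^2}}\sqrt{\E{b_\sigma^2}}$ gives
\begin{align*}
  \inner{\+L_u f}{\+L_v f}_{\mu} \geq -\mathcal{I}(u,v) \, \norm{\+L_u f}_{\mu,2} \, \norm{\+L_v f}_{\mu,2},
\end{align*}
since $\E{a_\sigma^2} = \norm{\+L_u f}_{\mu,2}^2$ and $\E{b_\sigma^2} = \norm{\+L_v f}_{\mu,2}^2$. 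Multiplying by $\mathcal{I}(u,v)$ requires it to be nonnegative, which holds: $P^{\sigma}$ has at least two states, so there is always a nontrivial eigenvalue, and $\lambda_2$ is at least the trace-based bound. More simply, when $\mathcal{I}(u,v) < 0$ we can replace it by $0$ and the claim becomes weaker, so this case needs no separate treatment.

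The main obstacle is the bookkeeping in the first step: verifying that the global operators $\+L_u, \+L_v$ restrict to the local ones on each fiber $\{\eta \supseteq \sigma\}$. The rest is a direct averaging argument.
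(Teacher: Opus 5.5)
Your route is the paper's: condition on $\sigma=\eta\setminus(U_u\cup U_v)$, apply \cref{lem:LO26-local-mixing} on each fiber, bound $\lambda_2(P^{\sigma})\le\mathcal{I}(u,v)$ (point-mass fibers contribute $0$), and finish with Cauchy--Schwarz. Your check that $\+L_u,\+L_v$ restrict to the link operators on each fiber is also correct.

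The gap is the step $\mathcal{I}(u,v)\ge 0$. The final Cauchy--Schwarz step needs it, since you must pass from $-\mathcal{I}(u,v)\,\E[\sigma]{a_\sigma b_\sigma}$ to $-\mathcal{I}(u,v)\sqrt{\E[\sigma]{a_\sigma^2}}\sqrt{\E[\sigma]{b_\sigma^2}}$. Neither of your justifications establishes it.
\begin{itemize}
\item The trace argument (zero diagonal, so the eigenvalues sum to $0$) only gives $\lambda_2(P^{\sigma})\ge -1/(m-1)$ on $m$ states. That is not nonnegativity: the uniform walk on a triangle has $\lambda_2=-1/2$, and a two-state walk has $\lambda_2=-1$.
\item Replacing a negative $\mathcal{I}(u,v)$ by $0$ ``so the claim becomes weaker'' is backwards. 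If $\mathcal{I}(u,v)<0$, the corollary asserts the \emph{stronger} bound $\inner{\+L_u f}{\+L_v f}_{\mu}\ge\abs{\mathcal{I}(u,v)}\,\norm{\+L_u f}_{\mu,2}\norm{\+L_v f}_{\mu,2}$. Averaging the fiberwise bounds gives only $\abs{\mathcal{I}(u,v)}\,\E[\sigma]{a_\sigma b_\sigma}$, and Cauchy--Schwarz points the wrong way. So that case would need real treatment if it could occur.
\end{itemize}

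The missing ingredient is bipartiteness, and it shows the negative case cannot occur.
\begin{itemize}
\item Every $\sigma$ in the maximum defining $\mathcal{I}(u,v)$ has $\mu^{\sigma}$ not a point mass. So the $1$-skeleton of the link is a bipartite graph between the surviving vertices of $U_u$ and $U_v$ with at least two edges, hence at least three vertices.
\item Flipping the sign of an eigenfunction on one side of the bipartition turns a $\lambda$-eigenfunction into a $(-\lambda)$-eigenfunction. Hence the spectrum of $P^{\sigma}$ is symmetric about $0$, with multiplicities.
\item If $\lambda_2(P^{\sigma})<0$, then $P^{\sigma}$ would have exactly one positive eigenvalue but at least two negative ones, contradicting this symmetry. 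So $\lambda_2(P^{\sigma})\ge 0$; in the reducible case it equals $1$.
\item Together with the empty-maximum convention, this gives $\mathcal{I}(u,v)\ge 0$.
\end{itemize}
This is the observation the paper makes in the proof of \cref{lem:LO26-local-mixing} ($P^{\sigma}$ is a walk on a bipartite graph with more than two vertices) and then uses in the corollary. The fact is specific to the partite setting; in the general setting of \cref{sec:oppenheim}, $\lambda_2$ of a codimension-$2$ link can be negative, as the triangle example shows.
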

\begin{proof}
Averaging over $\sigma=\eta\setminus(U_u\cup U_v)$ for $\eta\sim\mu$, we have
  \begin{align*}
    \inner{\+L_u f}{\+L_v f}_{\mu}
    &= \E[\sigma]{\inner{\+L_u f}{\+L_v f}_{\mu^{\sigma}}} \tag{Law of Total Probability} \\
    &\geq -\mathcal{I}(u,v) \cdot \E[\sigma]{\norm{\+L_u f}_{\mu^{\sigma}, 2} \cdot \norm{\+L_v f}_{\mu^{\sigma}, 2}} \tag{\cref{lem:LO26-local-mixing}} \\
    &\geq -\mathcal{I}(u,v) \cdot \norm{\+L_u f}_{\mu, 2} \cdot \norm{\+L_v f}_{\mu, 2}. \tag{Cauchy--Schwarz and $\mathcal{I}(u,v) \geq 0$}
  \end{align*}
\end{proof}

We first use this to complete the proof of \cref{thm:LO26-improved}.
\begin{proof}[Proof of \cref{thm:LO26-improved}]
  We first bound the spectral gap of the global down-up walk. Observe that for any $f$,
  \begin{align*}
    \norm{\+L f}_{\mu,2}^2
    &= \sum_w \norm{\+L_w f}_{\mu,2}^2 + \sum_{u \neq v} \inner{\+L_u f}{\+L_v f}_{\mu} \\
    &\geq \sum_w \norm{\+L_w f}_{\mu,2}^2 - \sum_{u \neq v} \mathcal{I}(u,v) \cdot \norm{\+L_u f}_{\mu, 2} \cdot \norm{\+L_v f}_{\mu, 2} \tag{\cref{cor:LO26-local-mixing}} \\
    &= \*x^{\intercal} \wrapp{\id - \mathcal{I}} \*x \tag{where $\*x \in \R^{n}$ has $x_{u} = \norm{\+L_{u} f}_{\mu,2}$} \\
    &\geq \epsilon \*x^{\intercal} \*x \tag{$\lambda_{\max}(\mathcal{I}) \leq 1 - \epsilon$} \\
    &= \epsilon \inner{f}{\+L f}_{\mu}. \tag{Using $\+L_{u}^{2} = \+L_{u}$ for all $u \in [n]$}
  \end{align*}
  Thus $\+L^2\succeq\epsilon\+L$. By \cref{lem:bochner} and $\+L = n\wrapp{\id-P_n^{\bigtriangledown}}$, this proves $\gamma(P_n^{\bigtriangledown})\geq\epsilon/n$ as desired.

  Regarding the first claim on spectral gaps of local walks, fix $\tau \in X(n-k)$ where $2 \leq k \leq n$. For $u,v \notin V(\tau)$, the spectral influence $\mathcal{I}^{\tau}(u,v)$ after conditioning on $\tau$ is upper bounded by the unconditional spectral influence $\mathcal{I}(u,v)$. Hence, by monotonicity of the spectral radius for nonnegative matrices (extending $\mathcal{I}^{\tau}$ by zero on the fixed parts), $\lambda_{\max}(\mathcal{I}^{\tau}) \leq \lambda_{\max}(\mathcal{I}) \leq 1 - \epsilon$. Applying the preceding argument mutatis mutandis to the global down-up walk of the conditional measure $\mu^{\tau}$ then yields a spectral gap lower bound of $\epsilon/k$. \cref{lem:universality} then finishes the proof.
\end{proof}

\subsection{Proof of \texorpdfstring{\Cref{lem:LO26-local-mixing}}{Lemma 3.1}}
Fix $\sigma\in X(n-2)$ and distinct $u,v \notin V(\sigma)$. Throughout this subsection, we work in $(X^{\sigma}, \mu^{\sigma})$ and the associated inner product space $L^{2}(\mu^{\sigma})$; to simplify notation, we omit the dependence of the inner product (and the induced norm) on $\mu^{\sigma}$. Our goal is to establish the inequality
\begin{align} \label{eq:LO26-local-target}
  \inner{\+L_u f}{\+L_v f} \geq -\lambda_2(P^{\sigma}) \cdot \norm{\+L_u f} \cdot \norm{\+L_v f}
\end{align}
for all real-valued functions $f$ on the facets of the $2$-partite link $X^{\sigma}$. If $\mu^{\sigma}$ is a point mass, both $\+L_u$ and $\+L_v$ vanish, so the claim is trivial. If $P^{\sigma}$ is reducible, then $\lambda_2(P^{\sigma})=1$ and the claim follows from Cauchy--Schwarz. Otherwise, $P^{\sigma}$ is a walk on a bipartite graph with more than two vertices, so $\lambda_2(P^{\sigma})\geq0$. In this case, our first step will be to reduce \cref{eq:LO26-local-target} to the case of \emph{linear functions}, i.e. functions of the form $f=g+h$ where $g$ depends only on coordinate $v$ and $h$ only on coordinate $u$.
\begin{lemma} \label{lem:linear-func-sufficiency}
  If \cref{eq:LO26-local-target} holds for all linear functions, then it holds for all functions.
\end{lemma}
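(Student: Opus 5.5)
The plan is an orthogonal decomposition of $L^{2}(\mu^{\sigma})$ into linear functions and their orthogonal complement, followed by a check that the complementary part only helps the inequality. Let $W \defeq \Ran P_u + \Ran P_v$. Since the link is $2$-partite with parts $U_u,U_v$, a function in $\Ran P_u$ depends only on coordinate $v$, and a function in $\Ran P_v$ depends only on coordinate $u$. So $W$ is exactly the space of linear functions. Given an arbitrary $f$, write $f = g + h$ with $g \in W$ and $h \in W^{\perp}$.

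First I will show that $P_u h = 0$ and $P_v h = 0$. This holds because $P_u$ is a self-adjoint projection in $L^{2}(\mu^{\sigma})$ (\cref{obs:L-down-up}) and $h \perp \Ran P_u$, so $\norm{P_u h}^2 = \inner{h}{P_u h} = 0$; the same argument applies to $P_v$. Hence $\+L_u h = \+L_v h = h$, which gives
\begin{align*}
  \+L_u f = \+L_u g + h, \qquad \+L_v f = \+L_v g + h.
\end{align*}
Next, $\+L_u g = g - P_u g \in W$ and $\+L_v g \in W$, so both are orthogonal to $h$. Expanding therefore yields
\begin{align*}
  \inner{\+L_u f}{\+L_v f} &= \inner{\+L_u g}{\+L_v g} + \norm{h}^2, \\
  \norm{\+L_u f}^2 &= \norm{\+L_u g}^2 + \norm{h}^2, \\
  \norm{\+L_v f}^2 &= \norm{\+L_v g}^2 + \norm{h}^2.
\end{align*}

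Now apply the hypothesis to the linear function $g$. Setting $\lambda \defeq \lambda_2(P^{\sigma})$, $a = \norm{\+L_u g}$, $b = \norm{\+L_v g}$ and $t = \norm{h}$, we get
\begin{align*}
  \inner{\+L_u f}{\+L_v f} \geq -\lambda a b + t^2 \geq -\lambda ab \geq -\lambda\sqrt{(a^2+t^2)(b^2+t^2)} = -\lambda \norm{\+L_u f}\cdot\norm{\+L_v f}.
\end{align*}
The last inequality is where the sign of $\lambda$ matters. It uses $\lambda \geq 0$, which was established just before the lemma in the remaining case; the point-mass and reducible cases were already handled separately. This completes the reduction.

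The only delicate step I expect is verifying that $\+L_u g$ and $\+L_v g$ stay inside $W$ and that $W^{\perp}$ is annihilated by both projections. Both facts rest on the $2$-partite structure: $P_u$ maps into functions of coordinate $v$ alone, and vice versa. Once these are in place, the final scalar inequality is immediate.
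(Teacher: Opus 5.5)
Your proof is correct and is essentially the paper's argument. Your $W^{\perp}$ is exactly the paper's $K = \Ker P_u \cap \Ker P_v$, and your $W$ is $K^{\perp} = \Ran P_u + \Ran P_v$: the common-kernel component adds $\norm{h}^2 \geq 0$ to the cross term and only enlarges $\norm{\+L_u f}$ and $\norm{\+L_v f}$, so $\lambda_2(P^{\sigma}) \geq 0$ closes the argument. The differences are presentational. You start from the span of linear functions and show its orthogonal complement is killed by both projections, whereas the paper starts from the common kernel. You also make explicit the norm comparison $\norm{\+L_u g} \leq \norm{\+L_u f}$, which the paper's final step uses implicitly.
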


\begin{proof}
For an operator $Q$, we use $\Ran Q$ and $\Ker Q$ to denote the range and kernel of $Q$, respectively. Recalling that $P_{u},P_{v}$ are the heat-bath projections defined above, define $K \defeq \Ker P_u \cap \Ker P_v$. Since $P_u$ and $P_v$ are self-adjoint with respect to $\inner{\cdot}{\cdot}$, the subspaces $K, K^\perp$ are closed under both $P_u$ and $P_v$.

Now, let $f$ be an arbitrary function, and decompose it as $f = f^K + f^{\perp K}$ where $f^K \in K$ and $f^{\perp K} \in K^{\perp}$. If we can establish that every function in $K^{\perp}$ is linear, then
\begin{align*}
  \inner{\+L_u f}{\+L_v f}
  &= \inner{\+L_u f^K}{\+L_v f^K} + \inner{\+L_u f^{\perp K}}{\+L_v f^{\perp K}} \\
  &= \norm{f^K}^2 + \inner{\+L_u f^{\perp K}}{\+L_v f^{\perp K}} \\
  &\geq \inner{\+L_u f^{\perp K}}{\+L_v f^{\perp K}} \\
  &\geq -\lambda_{2}(P^{\sigma}) \cdot \norm{\+L_{u} f^{\perp K}} \cdot \norm{\+L_{v} f^{\perp K}} \tag{Linearity of $f^{\perp K}$} \\
  &\geq -\lambda_{2}(P^{\sigma}) \cdot \norm{\+L_{u} f} \cdot \norm{\+L_{v} f} \tag{Using $\lambda_{2}(P^{\sigma}) \geq 0$}
\end{align*}
completes the proof. To show that all functions in $K^{\perp}$ are linear, note that $K^\perp = \Ran P_u + \Ran P_v$. Hence, if $f \in K^\perp$, there exist functions $\tilde{g}, \tilde{h}$ such that
\begin{align*}
  f = P_{u}\tilde{g} + P_{v}\tilde{h} = \E{\tilde{g}(\eta)\mid \eta(v)} + \E{\tilde{h}(\eta) \mid \eta(u)}, \quad \eta \sim \mu^{\sigma}.
\end{align*}
Clearly, $g \defeq P_{u}\tilde{g}$ depends only on coordinate $v$ and $h \defeq P_{v}\tilde{h}$ depends only on coordinate $u$.
\end{proof}

Now we are ready to prove \cref{eq:LO26-local-target} and \Cref{lem:LO26-local-mixing}.
\begin{proof}[Proof of \Cref{lem:LO26-local-mixing}]
Noting that $\+L_{\sigma}=\+L_u+\+L_v$ on this link, \cref{prop:local-expansion} applied to the $2$-partite link $(X^{\sigma},\mu^{\sigma})$ gives
\begin{align*}
    \inner{f}{(\+L_u + \+L_v)^2 f} \geq (1-\lambda_2(P^{\sigma})) \cdot \inner{f}{(\+L_u + \+L_v)f}.
\end{align*}
Expanding $(\+L_u + \+L_v)^2$ and rearranging gives
\begin{align} \label{eq:Bochner-local-GD}
    \inner{\+L_uf}{\+L_v f} \geq -\lambda_2(P^{\sigma}) \cdot \frac{\norm{\+L_u f}^2 + \norm{\+L_v f}^2}{2},
\end{align}
where we used $\+L_u^2 = \+L_u, \+L_v^2 = \+L_v$ and self-adjointness. We would like to replace the arithmetic mean in the right-hand side with the geometric mean to conclude \cref{eq:LO26-local-target}. By \Cref{lem:linear-func-sufficiency}, we may assume without loss of generality that $f = g + h$, where $g, h$ depend only on coordinates $v$ and $u$, respectively. Noting that $P_u g = g, P_v h = h$ and $\+L_u g = 0, \+L_v h = 0$, \cref{eq:Bochner-local-GD} is equivalent to
\begin{align*}
    \inner{\+L_uh}{\+L_v g} \geq -\lambda_2(P^{\sigma}) \cdot \frac{\norm{\+L_u h}^2 + \norm{\+L_v g}^2}{2}.
\end{align*}
Since this holds for any $g,h$, we have that for every $t > 0$,
\begin{align*}
    \inner{\+L_uf}{\+L_v f} = \inner{\+L_uh}{\+L_v g} 
    &= \inner{\+L_u (th)}{\+L_v (g/t)} \\
    &\geq -\lambda_2(P^{\sigma}) \cdot \frac{t^2 \norm{\+L_u h}^2 + \norm{\+L_v g}^2/t^2}{2} \\
    &= -\lambda_2(P^{\sigma}) \cdot \frac{t^2 \norm{\+L_u f}^2 + \norm{\+L_v f}^2/t^2}{2}.
\end{align*}
  If either norm is zero, the claim is immediate. Otherwise, taking $t^2 = \norm{\+L_v f}/\norm{\+L_u f}$ finishes the proof.
\end{proof}

\begin{remark}[Comparison with Guo--Zhang]\label{rmk:GZ26-comparison}
    The proof of \cite[Lemma~6]{GuoZhang2026Planar} uses the same expansion of $\+L^2$ and the same bounds on the cross terms as in our proof of \cref{thm:LO26-improved}.
    The slight difference lies in the proof of the two-coordinate estimate \cref{lem:LO26-local-mixing}: \cite[Lemma~5]{GuoZhang2026Planar} is derived using the Hirschfeld--Gebelein--R\'{e}nyi maximal correlation and positive semidefiniteness of a $3\times3$ Gram matrix associated with $f,P_u f,P_v f$.
    Our proof first reduces to linear functions and the two-coordinate Bochner-type inequality stated in \cref{prop:local-expansion}.
\end{remark}

{\sloppy
\printbibliography
}

\appendix

\section{Proof of \texorpdfstring{\cref{lem:universality}}{the universality lemma}}\label{sec:universality-proof}
The rough intuition is that the local walk $P^{\emptyset}$ exactly encodes pairwise correlations in $\mu$ (see e.g. \cite{ALO2020}), and that these pairwise correlations can be studied by applying \emph{linear} test functions to the global spectral gap of $P_{n}^{\bigtriangledown}$. Write $\lambda=\lambda_2(P^{\emptyset})$; our goal is to show $\lambda\leq(1-\epsilon)/(1+\epsilon(n-2))$. Let $v \neq 0$ be a corresponding eigenvector orthogonal to constants, so $\E[x \sim \mu_{1}]{v_{x}} = \langle v, \allone \rangle_{\mu_1} = 0$. We lift this to a global function on facets via $f(\sigma) \defeq \sum_{x \in \sigma} v_{x}$ for $\sigma \in X(n)$, and apply this test function $f$ to the Poincar\'{e} Inequality
\begin{align*}
    \frac{\epsilon}{n} \cdot \Var[\mu]{f} \leq \langle f, \wrapp{\id - P_{n}^{\bigtriangledown}}f \rangle_{\mu} = \E[\tau \sim \mu_{n-1}]{\Var[\mu^{\tau}]{f}},
\end{align*}
which is well-known to be equivalent to the assumed spectral gap lower bound $\gamma\wrapp{P_{n}^{\bigtriangledown}} \geq \frac{\epsilon}{n}$. Since $\E[x \sim \mu_{1}]{v_{x}} = 0$, we have $\E[\sigma \sim \mu]{f(\sigma)} = 0$ by the definition of $f$. The variance on the left-hand side is
\begin{align*}
    \Var[\mu]{f} &= \E[\sigma \sim \mu]{f(\sigma)^{2}} \\
    &= \sum_{x} \Pr[\sigma \sim \mu]{x \in \sigma} \cdot v_{x}^{2} + \sum_{x \neq y} \Pr[\sigma \sim \mu]{x,y \in \sigma} \cdot v_{x}v_{y} \\
    &= n \cdot \langle v,v \rangle_{\mu_{1}} + n \cdot (n-1) \cdot \langle v, P^{\emptyset}v \rangle_{\mu_{1}} \\
    &= n \cdot (1 + (n-1) \cdot \lambda) \cdot \norm{v}_{\mu_{1}}^{2}.
\end{align*}
Nonnegativity of the variance gives $\lambda\geq-1/(n-1)$, so $1+(n-2)\lambda>0$.
For the right-hand side, for $\tau \in X(n-1)$, again define $f(\tau) = \sum_{x \in \tau} v_{x}$ and write $g(\tau) = \E[x \sim \mu_{1}^{\tau}]{v_{x}}$. Then
\begin{align*}
    \E[\tau \sim \mu_{n-1}]{\Var[\mu^{\tau}]{f}} &= \E[\tau \sim \mu_{n-1}]{\E[x \sim \mu_1^\tau]{v_x^2}} - \E[\tau \sim \mu_{n-1}]{g(\tau)^{2}}.
\end{align*}
The first term equals $\norm{v}_{\mu_{1}}^{2}$. The second term is clearly nonnegative, and ending the analysis here would recover the result of \cite[Section~3.1]{AnariEtAl2024}, with denominator $\epsilon(n-1)$ in place of $1+\epsilon(n-2)$. We will sharpen the analysis by lower bounding the second term by a multiple of $\norm{v}_{\mu_{1}}^{2}$. For this, observe that by the same argument as above,
\begin{align*}
    \E[\tau \sim \mu_{n-1}]{f(\tau)^{2}} &= (n-1) \cdot (1 + (n-2) \cdot \lambda) \cdot \norm{v}_{\mu_{1}}^{2}
\end{align*}
and
\begin{align*}
    \E[\tau \sim \mu_{n-1}]{f(\tau)g(\tau)} &= (n-1) \cdot \E[\tau \sim \mu_{n-1}]{\E[x \sim \mathrm{Unif}(\tau)]{v_{x}} \cdot \E[y \sim \mu_{1}^{\tau}]{v_{y}}} \\
    &= (n-1) \cdot \E[x \sim \mu_{1}]{v_x\cdot\E[\substack{\tau \sim \mu_{n-1} \mid \tau \ni x \\ y \sim \mu_{1}^{\tau}}]{v_{y}}} \\
    &= (n-1) \cdot \langle v, P^{\emptyset}v \rangle_{\mu_{1}} \\
    &= (n-1) \cdot \lambda \cdot \norm{v}_{\mu_{1}}^{2}.
\end{align*}
By Cauchy--Schwarz,
\begin{align*}
    \E[\tau \sim \mu_{n-1}]{g(\tau)^{2}} \geq \frac{\E[\tau \sim \mu_{n-1}]{f(\tau)g(\tau)}^{2}}{\E[\tau \sim \mu_{n-1}]{f(\tau)^{2}}} = \frac{(n-1) \cdot \lambda^{2}}{1 + (n-2) \cdot \lambda} \cdot \norm{v}_{\mu_{1}}^{2}.
\end{align*}
Combining these estimates gives
\begin{align*}
    \epsilon\bigl(1+(n-1)\lambda\bigr)
    &\leq 1-\frac{(n-1)\lambda^2}{1+(n-2)\lambda}
    =\frac{(1-\lambda)\bigl(1+(n-1)\lambda\bigr)}{1+(n-2)\lambda}.
\end{align*}
If $1+(n-1)\lambda=0$, the desired bound follows directly from $\epsilon\leq n$. Otherwise, cancelling this positive factor and rearranging yields
\begin{align*}
    \lambda\leq\frac{1-\epsilon}{1+\epsilon(n-2)},
\end{align*}
which completes the proof.

\end{document}